\newcommand{\pastF}{\ensuremath{F^-}}
\newcommand{\pastG}{\ensuremath{G^-}}
\newcommand{\since}{\ensuremath{S}}
\newtheorem{example}{\bf Örnek}[section]
\newtheorem{theorem}{\bf Teorem}[section]
\newtheorem{proof}{\bf İspat}[section]
\tikzstyle{line}=[draw]
\tikzstyle{vecArrow} = [thick, decoration={markings,mark=at position
\newcommand\copyrighttext{%
  \footnotesize \textcopyright 2020 IEEE. Personal use of this material is permitted.
  Permission from IEEE must be obtained for all other uses, in any current or future
  media, including reprinting/republishing this material for advertising or promotional
  purposes, creating new collective works, for resale or redistribution to servers or
  lists, or reuse of any copyrighted component of this work in other works.
  }
\newcommand\copyrightnotice{%
\begin{tikzpicture}[remember picture,overlay]
\node[anchor=south,yshift=10pt] at (current page.south) {\fbox{\parbox{\dimexpr\textwidth-\fboxsep-\fboxrule\relax}{\copyrighttext}}};
\end{tikzpicture}%
}
\begin{document}

%
% paper title
% can use linebreaks \\ within to get better formatting as desired
\title{Kontrol Edilebilir ptSTL Formülü Sentezi \\
Synthesis of Controllable ptSTL Formulas}
% author names and affiliations
% use a multiple column layout for up to three different
% affiliations
\author{\IEEEauthorblockN{\textit{Irmak Saglam ve Ebru Aydin Gol}}
\IEEEauthorblockA{
Bilgisayar Mühendisliği, ODTÜ\\
Ankara, Türkiye\\
saglam.irmak@metu.edu.tr, ebrugol@metu.edu.tr\\
}
}

% make the title area
\maketitle
\copyrightnotice

\begin{ozet}
Bu bildiride anomali tespiti ve önlenmesi problemine, Sinyal Zamansal Mantığı (Signal Temporal Logic) tabanlı iki aşamalı bir çözüm sunulmaktadır. İlk aşama nedenlerin tespiti, ikinci aşama ise bir kontrol stratejisi ile nedenlerin sistem üzerinde engellenmesidir. İki aşama birbirine bağımlıdır. Bu bildiride, ilk aşama olan istenmeyen olayların nedenlerinin tespitinde kullanılan neden formülü şablonu geliştirilmektedir. Bildiride kullanılan şablon ile bütün kontrol edilebilir formüller tanımlanabilmektedir. Bu şablon için verimli bir formül sentezleme algoritması sunulmuş, ve sonuçlar örnek bir sistem üzerinde gösterilmiştir. 
%Bildiride sunulan şablon, geçmişte kullanılmış şablonları kapsamaktadır ve onlardan daha açıklayıcıdır. Hatta, bu şablonla üretilen formüllerin kümesinden daha açıklayıcı bir şablon kümesi yoktur. Sunulan şablonun kullanımının çözümün hassasiyetini ve başarısını artırması beklenmektedir.

%\boldmath
\end{ozet}
\begin{IEEEanahtar}
Formel metotlar, STL, gereksinim türetme, anomali tespiti ve önlenmesi.
\end{IEEEanahtar}
\begin{abstract}

In this work, we develop an approach to anomaly detection and prevention problem using Signal Temporal Logic (STL). This approach consists of two steps: detection of the causes of the anomalities as STL formulas and prevention of the satisfaction of the formula via controller synthesis. This work focuses on the first step and proposes a formula template such that any controllable cause can be represented in this template. An efficient algorithm to synthesize formulas in this template is presented. Finally, the results are shown on an example.

 % In this work, we are improving the formula template used at the first step. The new formula type that is presented has higher explanatoriness level than the existing ones. Furthermore, all the possible controllable cause formulas can be detected by the template. As a result, the overall precision and success of the approach is bound to improve.

\end{abstract}
\begin{IEEEkeywords}
Formal methods, STL, requirement mining, anomaly detection and prevention.
\end{IEEEkeywords}

% IEEEtran.cls defaults to using nonbold math in the Abstract.
% This preserves the distinction between vectors and scalars. However,
% if the conference you are submitting to favors bold math in the abstract,
% then you can use LaTeX's standard command \boldmath at the very start
% of the abstract to achieve this. Many IEEE journals/conferences frown on
% math in the abstract anyway.

% no keywords

% For peer review papers, you can put extra information on the cover
% page as needed:
% \ifCLASSOPTIONpeerreview
% \begin{center} \bfseries EDICS Category: 3-BBND \end{center}
% \fi
%
% For peerreview papers, this IEEEtran command inserts a page break and
% creates the second title. It will be ignored for other modes.
\IEEEpeerreviewmaketitle

\IEEEpubidadjcol

\section{G{\footnotesize İ}r{\footnotesize İ}ş}

Sistem izleri üzerinde istenmeyen olayların işaretlenip, bu olayların oluşmasını engelleyecek kontrol stratejilerinin üretilmesi robotik alanında karşılaşılan bir problemdir. İzlerin analiz edilip, istenmeyen olayların önlenmesine yönelik strateji geliştirilmesi \textit{anomali tespiti ve önlenmesi} problemi olarak da bilinir. Bu problem veri bilimi alanında farklı açılardan ele alınmaktadır~\cite{7424283}. Diğer taraftan bu problemi hedef alan popüler yaklaşımlardan biri, formel metod yaklaşımlarıdır.

Formel kontrol uygulamalarında amaç, zengin bir gereksinim dilinde ifade edilmiş olan bir formülden, bir sistem için kontrol stratejisi üretmektir~\cite{Kavraki:MPlanning,tabuada2006linear}. Formel metotlarda dinamik sistemler için kullanılan ana çözüm stratejisi, sistemin sonlu bir soyut modelini çıkarmaktır~\cite{tabuada2006linear}. Bu sonlu model için otomata teorik yaklaşımlar ile verilen gereksinimden bir kontrol stratejisi üretilir, ardından bu strateji dinamik sisteme uyarlanır. Karmaşık gereksinimler için otomatik olarak kontrol stratejisi üretilmesini sağlayan bu yaklaşım robotik uygulamaları için kullanılmaktadır~\cite{Kavraki:MPlanning}. Gerekirci olmayan (non-deterministic) ve stokastik sistemler için de benzer yaklaşımlar bulunmaktadır~\cite{LaWaAnBe-ICRA10}. Ancak soyut modelin oluşturulmasında kullanılan metotlar politopik işlemler içermekte ve bu da ölçeklendirilmelerini (çok boyutlu sistemlere uygulanmalarını) imkansız kılmaktadır. Ek olarak bu metotlar sistem dinamiğine özgü işlemler içermekte (örn. doğrusal) ve karmaşık dinamiklere sahip sistemler üzerinde uygulanamamaktadırlar. 

Bu bildiride istenmeyen olayların engellenmesi için kontrolcü üretimi problemine iki aşamalı bir çözüm sunulmaktadır. İlk aşamada etiketli sistem izlerinden oluşan bir veri kümesi kullanılarak, etiketlenmiş olayları ifade edecek belirli bir yapıdaki geçmiş zamanlı sinyal zamansal mantık (past time Signal Temporal Logic - ptSTL) formülleri üretilir. İkinci aşamada ise, bulunan formüllerin sağlanması, yani istenmeyen olayların oluşması, bir kontrolcü ile engellenir. Bu bildiride \textit{engellenebilir} veya \textit{kontrol edilebilir} formüllerin üretilmesine odaklanılmış ve üretilen formüllerin kontrol edilebilir olduğu kanıtlanmıştır. Bu ispat kontrolcü üretilmesi üzerine bir metot sunsa da, daha verimli kontrolcü üretme problemi ileriki çalışmalarda ele alınacaktır. 

Sunulan çözüm yöntemi, sistem dinamiğinden bağımsızdır. Temel olarak, sistem dinamiğine bakılmadan sadece sistem izlerinden bir formül türetilmekte ve bu formül için sistem dinamiğinden bağımsız geri beslemeli bir kontrolcü üretilmektedir. Bu sayede geliştirilen metot farklı özellikleri olan dinamik sistemlere uygulanabilecektir. Ek olarak, formül türetme ve kontrolcü sentezleme adımlarında karmaşık işlemler yer almadığı için çok boyutlu sistemlere de uygulanabilmektedir.  
 
Etiketli izlerden formül türetme problemi literatürde farklı açılardan ele alınmıştır~\cite{Yoo-RSS-17,mining_journal,Bartocci2014,CDC2019}. Bu çalışmaların bir çoğunda hedef bir sınıflandırıcı üretmektir ve kontrol edilebilme özelliği değerlendirilmiştir. \cite{CDC2019} numaralı referansta, bu çalışmada olduğu gibi, kontrol edilebilir bir formül türetilmiştir. Ancak \cite{CDC2019} çalışmasında ele alınan formül şablonu, sadece belirli bir süre boyunca aynı kontrol girdisinin uygulandığı durumları tanımlyabilmektedir. Bu nedenle de olası diğer kontrol edilebilir nedenleri ifade etmekte yetersiz kalmaktadır. 

%Ancak, (self-citation)'da verilen çözümün ilk aşamasında kullanılan formül şablonu, olası bütün nedenleri ifade etmekte yetersiz kalmaktadır. 
% Bu nedenle, etiketli olayların nedenlerinin bir kısmı, gereğinden daha az hassas formüller tarafından ifade edilmeye çalışılırlar. Bu da, ikinci aşamada üretilen kontrolcünün sistemdeki istenmeyen etiketli olaylar için yeni nedenler üretme ihtimalini artırır. Sonuç olarak, ilk aşamadaki formül şablonunun yetersizliği, kontrolcünün istenmeyen olayların sayısını azaltmadaki başarısını düşürür.

Bu çalışmada, yeni bir kontrol edilebilir formül şablonu tanımlanmış, ve bu şablona uygun formüllerin üretilmesi için bir formül sentezleme algoritması anlatılmıştır. Kontrol edilebilir bütün formüllerin bu şablon ile ifade edilebileceği gösterilmiştir. Bu sayede, istenmeyen olayların etiketli olduğu bir kontrol sisteminde, en hassas kontrol edilebilir nedenlerin dahi bu şablonla üretilebileceği garanti edilmektedir. 

% (self-citation)'daki çözüm metodunun ilk aşamasında kullanılması için yeni bir kontrol edilebilir neden formülü şablonu öneriyoruz. Önerdiğimiz formül şablonu herhangi bir olası kontrol formülünü dışarıda bırakmamakta, dolayısıyla, parametreler yeterince büyük tutulursa, istenmeyen olayların etiketli olduğu bir kontrol sisteminde, en hassas kontrol edilebilir nedenlerin dahi bu şablonla üretilebileceğini garanti etmektedir. Bu garanti, sonuç olarak, ikinci aşamada benimsenecek bir kontrol stratejisinin, yeni nedenler üretmekten kaçınmada en iyi performansı göstereceğini ima etmektedir.\\

%%% SONRADAN EKLENEBİLİR, YER KALIRSA
%Bu çalışmada, ikinci kısımda ön bilgiler ve problem tanımı vererek başlıyoruz. Üçüncü kısımda kontrol edilebilirliği tanımlıyoruz. IV'te yeni formül şablonumuzu tanıtıyoruz ve bu şablondaki formüllerin kontrol edilebilirliklerini tespit etmeyi algoritmik olarak kolaylaştıracak bir teorem kanıtlıyor, daha sonra kontrol edilebilir tüm formüllerin bu şablondaki bir formüle denk olduğunu kanıtlıyoruz. Böylece, önerdiğimiz şablonla bulacağımız kontrol formülleri kümesinin, hiçbir olası kontrol formülünü dışarıda bırakmayacağını kanıtlıyoruz. V'te etiketli bir sinyalde, tanıttığımız şablondaki kontrol edilebilir formülleri tespit etmek için kullandığımız algoritmayı tanıtıyor, VI'da sonuçlarımızı bir örnek üzerinde göstererek makaleyi bitiriyoruz.

\section{Ön B{\footnotesize İ}lg{\footnotesize İ}ler ve Problem Tanımı}

\subsection{Sistem tanımı}

Kesikli bir dinamik sistem Denklem~\eqref{eq:system}'deki gibi ifade edilmiştir. 
%Aşağıdaki kuralla tanımlanmış parçalı sabit bir kontrol sistemi üzerinde,
\begin{equation}\label{eq:system}
x_{k+1} = f(x_k, u_k, w_k)
\end{equation}
$x_k = [x_k^{0}, \ldots, x_k^{n-1}] \in \mathbb{X} \subset \mathbb{R}^n$ sistemin $k$ anındaki durumunu, $u_k = [u_k^{0}, \ldots, u_k^{m-1}] \in \mathbb{U} \subset \mathbb{R}^m$ kontrol girdilerini, $w_k \in \mathbb{W} \subset \mathbb{R}^l$ ise sistem gürültüsünü temsil eder. Bu çalışmada, her kontrol girdisinin değerlerini sınırlı bir kümeden aldığı varsayılmıştır. Yani $\mathbb{U}^i = \{c^{(i,1)},\ldots, c^{(i,M_i)}\} \subset \mathbb{R}, i=0,\ldots,m-1$ ve $\mathbb{U} = \mathbb{U}^0 \times \ldots \times \mathbb{U}^{m-1}$. Sistemin sınırlı uzunluktaki bir izi \[\mathbf{x} = (x_0, u_0), \ldots, (x_{N}, u_{N}),\] ile gösterilir, öyle ki bu iz üzerindeki her bir $k$ anı için $x_{k+1} = f(x_k, u_k, w_k)$ eşitliğini sağlayan bir gürültü değeri $w_k \in \mathbb{W}$ vardır. Bir $\mathbf{x}$ izinin etiketi, aynı uzunlukta ikili (binary) bir seridir ve aşağıdaki gibi gösterilir,
\begin{equation}\label{eq:labels}
\mathbf{l} = l_0, \ldots, l_N,   \quad l_k \in \{0,1\}
\end{equation}
$l_k = 1$ etiketi $k$ anında istenmeyen bir olay gerçekleştiğini ifade eder. %Etiket dizisi sistem ve kontrol girdileri üzerini tanımlanmış bir $g$ fonksiyonu tarafından üretilebilir. $g: \mathbb{X} \times \mathbb{U} \to \{0,1\}$, ve $l_k = g(x_k, u_k)$.
Sistem ~\eqref{eq:system} için etiketli bir iz kümesi aşağıdaki gibi gösterilir: 
\begin{equation}\label{eq:dataset}
\mathcal{D} = \{(\mathbf{x} _i, \mathbf{l} _i)\}_{i=1, \ldots, D}.
\end{equation}

% Bir $\mathbf{x}$ sisteminde, bir $k$ anında $\phi$ formülünün sağlandığını anlatmak için $(\mathbf{x},k)\models \phi$ gösterimini kullanıyoruz. Bu gösterim, $\phi$ formülündeki $x^i$ ve $u^i$ girdileri $\mathbf{x}$'in $k$ anındaki değerleri ile girildiğinde, formülün $\mathbf{T}$(Boole sabiti - doğru) değerini verdiğini gösterir. Benzer şekilde $(\mathbf{x},k) \not \models \phi$, sistemin $k$ anında $\phi$ formülünü sağlamadığını, yani, $x^i$ ve $u^i$ girdileri sistemdeki şekliyle formüle girildiğinde, formülün $\mathbf{F}$(Boole sabiti - yanlış) değerini verdiğini gösterir.
\subsection{Sinyal Zamansal Mantık}

Geçmiş zamanlı sinyal zamansal mantık için sözdizim kuralları
\begin{align*}
& \varphi = \mathbf{T} | x>c | x < c | u = c | \neg\varphi | \varphi_1 \wedge \varphi_2 | \varphi_1 \vee \varphi_2 | \\ & \quad \quad \quad \quad \varphi_1 \since_{[a,b]} \varphi_2 | \pastF_{[a,b]} \varphi | \pastG_{[a,b]}  \varphi
\end{align*}
şeklinde tanımlanır. $\neg$ - değil, $\wedge$-ve ile $\vee$-veya mantık operatörleridir. $\since_{[a,b]}$(olduğundan beri, since), $\pastF_{[a,b]}$(geçmişte bir zaman, previously) ve $\pastG_{[a,b]}$(şimdiye kadar, always) ise geçmiş zaman operatörleridir. Bir ptSTL formülünün semantiği bir sinyal $\mathbf{x}$ ve zaman $k$ için tanımlanır. Özetle, $k$ anında, $ \pastF_{[a,b]} \varphi$ formülünün sağlanması (yani, $(\mathbf{x},k) \models \pastF_{[a,b]}  \varphi$), $\varphi$'nin $[k-a, k-b]$ aralığında en az bir an sağlandığını, $(\mathbf{x},k)\models \pastG_{[a,b]} \varphi$ , $\varphi$ formülünün $[k-a, k-b]$ aralığındaki her anda sağlandığını, $(\mathbf{x},k) \models \varphi_1 \since_{[a,b]} \varphi_2$ formülü ise $\varphi_2$ formülünün $[k-a, k-b]$ aralığında bir $t$ anında sağlandığını ve $\varphi_1$ formülünün $[t, k-b]$ aralığındaki her anda sağlandığını ifade eder. $(\mathbf{x},k) \models \varphi$ notasyonu $\mathbf{x}$ sinyali üzerinde, $k$ anında $\varphi$ formülünün sağlandığını ifade eder. 

\newtheorem{defn}{\textbf{Tanım}} 
\begin{defn}[Operatör Sayısı]\label{defn:definition1} 
Bir formülün \textit{Operatör Sayısı}, o formülün içindeki Boole ve STL operatörlerinin ($\{\vee,\wedge,\neg,\pastF,\pastG,\since\}$) kaç tane olduğunu ifade eder. Bir $\phi$ formülünün operatör sayısı $os(\phi)$ olarak gösterilir. Örneğin, $os( (x^1 < 10) \wedge (u^0 = 2) ) = 1$, $os(\phi_1 \since_{[a,b]} \phi_2) = os(\phi_1) + os(\phi_2) + 1$.
\end{defn}

\begin{defn}[$X$ Operatörü]\label{defn:definition2} 
$X$, 'bir önceki anda' operatörüdür, formülün zamansal olarak bir önceki anki sağlanma değerinden bahsetmeyi sağlar. Herhangi bir $\mathbf{x}$ izi ve ptSTL $\phi$ formülü için:
$(\mathbf{x}, k) \models X \phi \iff (\mathbf{x}, k-1) \models \phi $. $X \phi$ formülü $\pastF_{[1,1]} \phi$ ile $\pastG_{[1,1]} \phi$ formüllerine denk olduğu için standard notasyonda bulunmamaktadır. 
\end{defn}

 $n$ tane $X$ operatörünün ard arda sıralandığı $XX\ldots X \phi$ formülü, kısaca $X^n \phi$ olarak gösterilir.

\textbf{Not:} Operatör sayısı, yalnızca Boole ve STL operatörleri üzerinde tanımlanmıştır. Bir formülün içindeki $X$ operatörü sayısı, o formülün \textit{Operatör Sayısı}'nı etkilemez.

\subsection{Problem Tanımı}

% Bu veri kümesi ışığında amaç, ettle asgari sayıda çelişen bir sonuç elde edecek şekilde bir ptSTL formülü tanımlayabilmektir.

Bu çalışmada amaç bir dinamik sistem~\eqref{eq:system} ve bu sistemin etiketlenmiş izlerinden oluşan veri kümesi kullanılarak, bu etiketleri ifade edecek \textbf{kontrol edilebilir} bir ptSTL formülü bulmaktır. Bu problem zaman serisi verileri üzerinde ikili bir sınıflandırma problemi olarak değerlendirilebilir. Bu çalışmada, üretilen formülün veri kümesi ile uyumluluğu  $F_{\beta}$ skoru ile ölçülmüştür.

%yukarıda tanımlanan sistemin etiketli sinyal kümesinden, bu sinyallerin etiketlerini yüksek başarıyla ifade eden  Burada yüksek başarıdan kasıt, bulunan formülün veri kümesi üzerinde çalıştırılmasıyla türetilen etiketlerin, sinyal kümesindeki etiketlerle farkının asgari sayıda olmasıdır. Asıl veri kümesinde "0" olarak etiketlenip, üretilen ptSTL formülünün "1" olarak yansıttığı her sinyal parçası ve asıl veri kümesinde "1" olarak etiketlenip, üretilen ptSTL formülünün "0" olarak işaretlediği her sinyal parçası uyumsuzluk sayısını yükseltir. 
% Asgari uyumsuzluk azami başarı anlamına gelir. Bir formülün uyumunu puanlamak bir sınıflandırma problemidir ve biz bu çalışmada, uyumluluk değerini 

% [referans] Örneklerimizde $\beta = 0.3$ değerini kullanıyoruz.

\begin{example}\label{ex:trafficsystem}
Bildiride, Şekil ~\ref{fig:traffic}'de verilen, 5 yol ve 2 trafik ışığından oluşan trafik sistemi, örnek olarak kullanılmıştır. Her yolun sonlu bir kuyruk olarak modellenmesi ile parçalı doğrusal bir dinamik sistem modeli tanımlanır~\cite{coogan2016traffic}. Bu modelde sistem durumları $x^0, x^1, x^2, x^3$ ve $x^4$, sırasıyla $0,1,2,3$ ve $4.$ yollardaki araç sayılarını temsil eder.  Ana yollar, $i \in \{0,1,2\}$, için $x^i \in [0,40]$ ve yan yollar, $i \in \{3,4\}$, için  $x^i \in [0,20]$ olarak tanımlanmıştır. Kontrol girdileri trafik ışıklarıdır. $u^0$, $s0$; $u^1$, $s1$ ışığını temsil eder. $u^0, u^1 \in \{0,1\}$. Bir trafik ışığının $0$ değeri alması yatay trafik akışına izin verip dikey akışı durdurduğunu, $1$ değeri alması ise tam tersini ifade eder. 

Bu sistem için etiketli veri kümesi oluşturulurken, sistem rastgele başlangıç koşullarından başlatılmış ve simulasyon sırasında, her adımda kontrol girdileri rastgele seçilmiştir. Bir adım sonrasında, herhangi bir yoldaki araç sayısı, o yolun kapasitesinin $\%75$'inden fazlaysa, yani $k+1$ anındaki sistem durumu, $\mathbf{x}_{k+1}$, \eqref{eq:violationformula} formülünü sağlamıyorsa, $l_k = 1 $ olarak atanmıştır (bir adım öncesindeki etiket erken tanı için kullanılmıştır).
\begin{equation}\label{eq:violationformula}
x^0 < 30 \wedge x^1 < 30  \wedge  x^2 < 30 \wedge x^3 < 15 \wedge x^4 < 15 
\end{equation}
Veri kümesi oluşturulurken uzunlukları $100$ olan toplam  $20$ simulasyon yapılmıştır. Bu veri kümesinde, \eqref{eq:violationformula} formülünün sağlanmama oranı (trafik tıkanıklığı oranı) $\%46$ olmuştur. Bu örnek sistem için amaç, tıkanıklığa yol açan durumları kontrol edilebilir formüller halinde üretmek, ardından bu formüller ile sinyal kontrol stratejisi üretip trafik tıklanıklığını azaltmaktır. 
\end{example}

\begin{figure}[h]
\centering
\includegraphics[width=7cm]{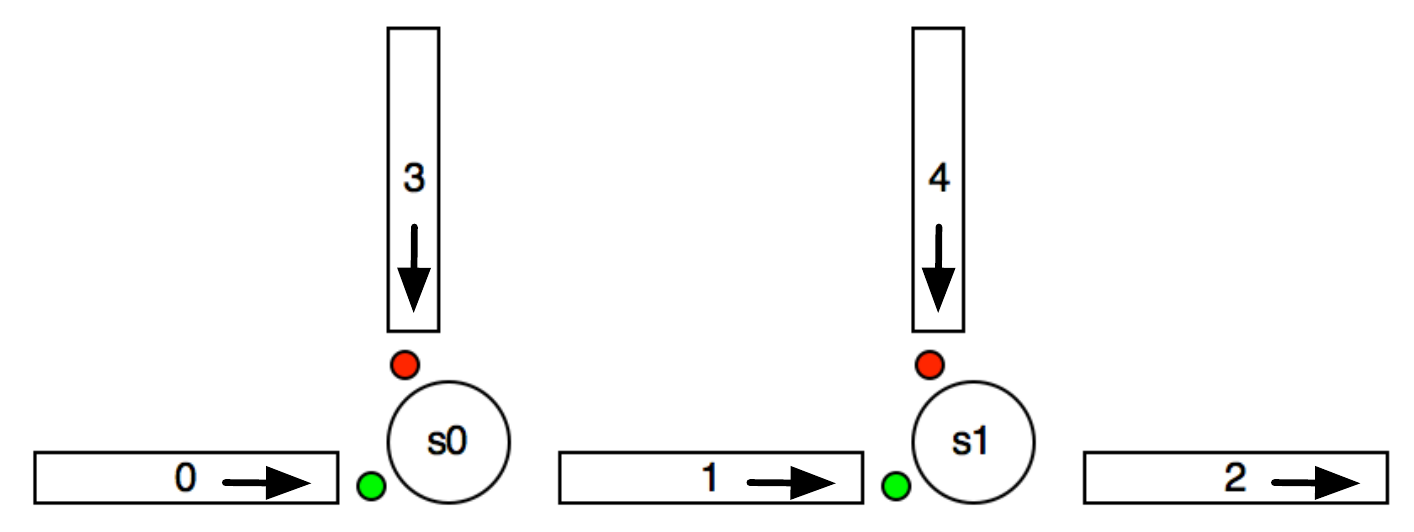}
\caption{5 yol ve 2 trafik ışığından oluşan bir trafik sistemi.}\label{fig:traffic}
\end{figure}

\section{Kontrol Ed{\footnotesize İ}leb{\footnotesize İ}l{\footnotesize İ}rl{\footnotesize İ}k}

Bir kontrol sistemi~\eqref{eq:system} için, kontrol girdileri ($u^i \in \mathbb{U}$) ve sistem değişkenleri ($x^i \in \mathbb{X}$) üzerine tanımlanmış bir ptSTL formülü $\Phi$, eğer her bir $k$ anında sistemin bir $k'<k$ anına kadar olan izi,
\[(x_0,u_0), \ldots,(x_{k'-1},u_{k'-1}, x_{k'})\]
için sistemin $k$ anında $\Phi$
formülünü sağlamamasını garanti edecek ($(\mathbf{x},k) \not \models \Phi$) bir kontrol girdisi dizisi $u_{k'},u_{k'+1}, \ldots u_{k}$ bulunabiliyor ise, $\Phi$ formülü \textit{kontrol edilebilirdir} denir. Diğer bir deyişle, $\Phi$'nin kontrol edilebilirliği, $\neg \Phi$'nin sağlanabilirliğine (satisfiability) denktir.  Verilen kontrol edilebilirlik tanımı, değerlendirilen sistem için, her bir $k$ anında sistemin $\Phi$ formülünü ihlal etmesini sağlayacak bir kontrol stratejisinin olduğunu garanti eder. 

% için eğer herhangi bir sistem durumu dizisi $x_0, \ldots, x_k$ için, $k$ anında sistem izinin $\Phi$ formülünü sağlamamasını garanti edecek bir kontrol girdisi dizisi $u_0, \ldots, u_k$ bulunabiliyor ise, ($(\mathbf{x},k) \not \models \Phi$), $\Phi$ formülü \textit{kontrol edilebilirdir} denir. 

% ESKI VERSİYON
% Kontrol girdileri ($u^i \in \mathbb{U}$) ve sistem değişkenleri ($x^i \in \mathbb{X}$) üzerine tanımlanmış bir ptSTL formülü $\Phi$ için eğer herhangi bir sistem durumu dizisi $x_0, \ldots, x_k$ için, $k$ anında sistem izinin $\Phi$ formülünü sağlamamasını garanti edecek bir kontrol girdisi dizisi $u_0, \ldots, u_k$ bulunabiliyor ise, ($(\mathbf{x},k) \not \models \Phi$), $\Phi$ formülü \textit{kontrol edilebilirdir} denir. 

%$\mathbf{x}: (x_0, u_0), \ldots (x_{k-1}, u_{k-1})$; $k$ anında kontrol girdilerinin ($u^i_k$) değerleri seçilerek elde edilen sistem izinin $\Phi$ formülünü sağlamaması garanti edilebiliyor ise ($\mathbf{x}, k \not \models \Phi$), 

% Sistemde, istenmeyen olayları açıklayan formüllerden özellikle kontrol edilebilir olanlarını bulmak isteme nedenimiz, bir kontrolcü ile bu formüllerin kontrol girdilerini değiştirerek bu nedenlerin sistemde bir daha hiç sağlanmamasını garanti edebilecek olmamızdır. \\

Örnek bir kontrol edilebilir formül $u^1$'in değer kümesi $\mathbb{U}^1=\{0,1\}$ olduğu durumda, $\Phi = (u^1=1) \wedge (x^1>30)$'dur. $x^1$'in $30$'dan küçük veya büyük olmasından bağımsız olarak, $u^1$'e $0$ değeri atanarak, herhangi bir anda $(\mathbf{x},k)\not \models \Phi$ durumu sağlanabilir. % Başka bir örnek, yalnızca kontrol girdilerinden oluşan  $\mathbb{U}^1=\mathbb{U}^2=\{0,1\}$ için, $\Phi = u^1=1 \vee u^2=0$'dır. Bir kontrol edilemez formül ise, aynı kontrol girdisi değer kümeleri için $\Phi = (u^1=0 \vee u^1=1)$'dir. Yalnızca kontrol girdilerinden oluşmasına rağmen, $u^1$'in alabileceği $0$ ve $1$'den başka değer olmadığından, bu formül bir totolojidir, hiçbir olası kontrol girdisi kümesi için $(\mathbf{x},k)\not \models \Phi$ sağlanamaz. 

% EBRU: FALSE sayılsın, önemli değil. 
%Bir formüle \textit{kontrol edilebilir neden formülü} denmesi için, o formülün en az bir sistemde, bir etiketli izi açıklayabilmesi gerekir. Yani, her kontrol edilebilir neden formülü $\Phi$ için, $\exists \mathbf{x}, k$ öyle ki $\mathbf{x},k\models\Phi$.

%\section{Kontrol Edilebilir Formül Şablonu}
% [] Referanstaki çalışmada kontrol edilebilir formül şablonu $\Psi$ aşağıdaki gibi verilmiştir.
% \begin{equation}\label{eq:formulashape}
% \Psi :=  \Phi_1 \vee \ldots  \vee \Phi_p,  
% \end{equation} 
% \quad \quad \quad \quad \quad \quad \quad \quad \quad \quad \quad öyle ki
% \begin{equation}\label{eq:formulainside}
%   \quad \Phi_i  := (\pastG_{[1,b_i]} u^j=c_i) \wedge ( \pastF_{[1,1]} \phi_{i} ) , 
 %  \end{equation}
 %  ve $\phi_i$, $\{ x^0, \ldots, x^{n-1}\} \cup \{u^0,  \ldots, u^{m-1}\}$ üzerinde herhangi bir ptSTL formülü.\\

% Bizim bu çalışmada önerdiğimiz kontrol formülü şablonunda ise \eqref{eq:formulashape}'teki $\vee$'lerle ayrılmış çoklu neden formülü yapısı korunmuş, $\Psi$'yı oluşturan alt neden formülleri $\Phi_i$'ların her birinin şablonu ise aşağıdaki gibi tanımlanmıştır:
Bu çalışmada önerilen kontrol edilebilir formül şablonu $\Psi$~\eqref{eq:formulashape} olarak tanımlanmıştır.  
\begin{equation}\label{eq:formulashape}
\Psi :=  \Phi_1 \vee \ldots  \vee \Phi_p,  
\end{equation} 
$\Psi$~\eqref{eq:formulashape} bir çok kontrol edilebilir $\Phi_i$~\eqref{eq:neden} \textit{neden} formülünün \textit{veya} operatörü ile birleştirilmesi ile oluşturulmuştur. Her bir neden formülü~\eqref{eq:neden}'da verilen şablona uygun olarak tanımlanmıştır. 

\begin{equation}\label{eq:neden}
\Phi^{(l,r)}_i = \phi^{\{u\}}_i \wedge  \phi^{\{x,u\}}_i
\end{equation}
Burada $\phi^{\{u\}}_i$ yalnızca kontrol girdileri ($u$) içeren bir ptSTL formülüdür, $\phi^{\{x,u\}}_i$ ise hem kontrol girdilerini hem sistem durumunu ($x,u$) içeren bir ptSTL formülüdür ve  $os(\phi^{\{u\}}) = l$ ve  $os(\phi^{\{x,u\}}) = r$'dir. 

\begin{theorem}\label{theorem:Teorem1} 
$\phi^{\{u\}}$ kontrol edilebilirse, $\Phi^{(l,r)} = \phi^{\{u\}} \wedge  \phi^{\{x,u\}}$ kontrol edilebilirdir.
\end{theorem}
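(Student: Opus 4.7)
İspat için kontrol edilebilirlik tanımını doğrudan uygulamayı planlıyorum. Tanımda belirtildiği üzere, bir $\Phi$ formülünün kontrol edilebilirliği $\neg \Phi$'nin sağlanabilirliğine denktir. Dolayısıyla her $k$ anı ve verilen herhangi bir $k' < k$ anına kadar olan sistem izi için, $(\mathbf{x},k) \not\models \Phi^{(l,r)}$ olmasını garanti eden bir $u_{k'}, \ldots, u_k$ kontrol dizisinin var olduğunu göstermek yeterli olacaktır. De Morgan özdeşliğini kullanarak
\[\neg \Phi^{(l,r)} = \neg (\phi^{\{u\}} \wedge \phi^{\{x,u\}}) \equiv \neg \phi^{\{u\}} \vee \neg \phi^{\{x,u\}}\]
yazabilirim; yani $\Phi^{(l,r)}$'yi ihlal etmek için disjunktlardan herhangi birinin, özellikle $\phi^{\{u\}}$'nun ihlali yeterlidir.

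İkinci adımda teoremin hipotezini devreye sokmayı planlıyorum. $\phi^{\{u\}}$ kontrol edilebilir olduğundan, her $k$ anı ve her geçmiş iz için $(\mathbf{x},k) \not\models \phi^{\{u\}}$ olmasını sağlayan bir $u_{k'}, \ldots, u_k$ kontrol girdi dizisi mevcuttur. Bu noktadaki kritik gözlem, $\phi^{\{u\}}$ formülünün yalnızca kontrol girdileri üzerinde tanımlanmış olmasıdır; sağlanma değeri ne $x$ durumlarına ne de $w$ gürültüsüne bağlıdır. Dolayısıyla $\phi^{\{u\}}$'nun kontrol edilebilirliğinden gelen bu dizi, gerçek sistem dinamiği \eqref{eq:system} altında uygulandığında da $k$ anında aynı şekilde $\phi^{\{u\}}$'yu ihlal edecektir; seçimin sistem trajektorisinden bağımsız olması tek başına yeterlidir.

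Son adımda bu iki parçayı birleştireceğim. Aynı $u_{k'}, \ldots, u_k$ dizisini $\Phi^{(l,r)}$ formülü için kullandığımda, konjunksiyon semantiği gereği bileşenlerinden biri ($\phi^{\{u\}}$) $k$ anında sağlanmadığı için $\phi^{\{x,u\}}$'nun değeri ne olursa olsun $(\mathbf{x},k) \not\models \Phi^{(l,r)}$ sonucuna ulaşılır. Bu da kontrol edilebilirlik tanımını doğrudan karşılar ve teorem ispatlanmış olur. İspatın belirleyici unsuru konjunksiyonun bu asimetrik yapısını sömürmek olduğundan ciddi bir teknik engel beklemiyorum; dikkat etmem gereken tek husus $\phi^{\{u\}}$'nun kontrol edilebilirliğini garantileyen stratejinin \emph{durumdan bağımsız} olarak uygulanabildiğini vurgulamak, ki bu da formülün yalnızca $u$ değişkenlerini içermesinden doğrudan çıkmaktadır.
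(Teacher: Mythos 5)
Doğru ve makalenin kendi ispatıyla özünde aynı yaklaşımı izliyorsunuz: $\phi^{\{u\}}$'nun kontrol edilebilirliğinden gelen kontrol dizisini alıp, bir konjunktın ihlalinin tüm konjunksiyonun ihlalini gerektirdiği gözlemiyle sonuca ulaşıyorsunuz. $\phi^{\{u\}}$'nun yalnızca kontrol girdilerine bağlı olduğu için stratejinin durumdan bağımsız uygulanabildiğine dair eklediğiniz vurgu, makalenin örtük bıraktığı bir ayrıntıyı açıklığa kavuşturan yerinde bir ektir.
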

\begin{proof}[İspat]
$\phi^{\{u\}}$ formülünün kontrol edilebilir olması, her $k$ anında, her sistem izi için $(\mathbf{x}, k) \not \models \phi^{\{u\}}$'ni sağlayacak kontrol $u$ bulunabileceğine denktir. $(\mathbf{x}, k) \not \models \phi^{\{u\}} \implies (\mathbf{x}, k) \not \models \phi^{\{u\}}\wedge  \phi^{\{x,u\}}=\Phi^{(l,r)} $ gözlemi ispatı tamamlar. 
% Sistemimin $m$ kontrol, $n$ sistem girdisi olduğunu varsayalım. $\phi^{\{u\}}$ kontrol edilebilirse, her $k$ anında, olası her sistem girdisi dizisi $x_k = (x_k^1, \ldots, x_k^n)$ için, $u_k^i=c^i \in \mathbb{U}\quad \forall i \in \{1, \ldots, m\}$ olacak şekilde, kontrol girdilerinin alabildiği en az bir değerler dizisi $(c^1, ..., c^m)$ vardır ki $(\mathbf{x},k) = ((x, (c^1, \ldots, c^m) ), k) \not \models \phi^{\{u\}} \implies (\mathbf{x},k) \not \models \Phi^{\{u\}} \wedge  \phi^{\{x,u\}}=\Phi^{(l,r)}$. Bu nedenle, aynı kontrol girdisi değer kümesi $(c^1, ..., c^m)$ ile $ \Phi^{(l,r)}$ da kontrol edilebilirdir.
\end{proof}
\begin{theorem}\label{theorem:Teorem2} 
Denklem~\eqref{eq:system}'deki gibi tanımlanmış sistemler için, $\textbf{F}$ (false) formülü duşındaki kontrol edilebilir bütün formüller~\eqref{eq:neden} şeklinde yazılabilir. 
% $\Phi^{(l,r)} = \phi^{\{u\}} \wedge  \phi^{\{x,u\}}$ şablonunda yazılabilir.
\end{theorem}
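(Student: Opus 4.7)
The plan is to establish, for every controllable $\Phi \neq \mathbf{F}$, an explicit decomposition $\Phi \equiv \phi^{\{u\}} \wedge \phi^{\{x,u\}}$ in which $\phi^{\{u\}}$ is a controllable, control-only ptSTL formula. My strategy begins by setting $\phi^{\{x,u\}} := \Phi$; once this choice is fixed, the theorem reduces to producing a controllable ptSTL formula $\phi^{\{u\}}$ over the control variables alone that is logically implied by $\Phi$, since then $\phi^{\{u\}} \wedge \Phi \equiv \Phi$, and Theorem~\ref{theorem:Teorem1} guarantees that the resulting conjunction is controllable.

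To construct $\phi^{\{u\}}$, I would exploit the equivalence noted just before Theorem~\ref{theorem:Teorem1}: the controllability of $\Phi$ is the same as satisfiability of $\neg \Phi$. From this I would extract a control input pattern $\bar{u}$ over the time window spanned by $\Phi$'s temporal operators, under which $\Phi$ is guaranteed to be falsified irrespective of the state trajectory. Encoding the condition ``the control trace does not match $\bar{u}$'' as a control-only ptSTL formula -- using literals of the form $u^i = c^{(i,j)}$, Boolean connectives, and the previous-time operator $X$ from Definition~\ref{defn:definition2} -- yields the desired $\phi^{\{u\}}$. By construction, $\phi^{\{u\}}$ is falsified precisely by $\bar{u}$, hence controllable; and since $\bar{u}$ forces $\neg \Phi$, every trace satisfying $\Phi$ must also satisfy $\phi^{\{u\}}$, giving the implication $\Phi \Rightarrow \phi^{\{u\}}$ that completes the decomposition.

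The main obstacle I anticipate is justifying the existence of a \emph{uniform} ``bad'' control pattern $\bar{u}$: the definition of controllability only provides one control sequence per past trace, not one that simultaneously works for all histories and all admissible noise realizations. To circumvent this, I would proceed by induction on the syntactic structure of $\Phi$: the atomic control literals and Boolean connectives admit immediate extraction of a control-only implicant, while the temporal cases ($\since_{[a,b]}$, $\pastF_{[a,b]}$, $\pastG_{[a,b]}$) require propagating the control-only implicants through their windows using standard ptSTL equivalences (e.g. distributing $\pastG_{[a,b]}$ over conjunctions whose second conjunct is control-only). The exclusion of $\Phi = \mathbf{F}$ ensures that the induction does not terminate with a degenerate $\phi^{\{u\}} \equiv \mathbf{T}$; the case $\Phi \equiv \mathbf{T}$ does not arise because $\mathbf{T}$ is not controllable in the first place.
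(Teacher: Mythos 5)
Your decomposition takes a genuinely different route from the paper's. You keep $\Phi$ intact as $\phi^{\{x,u\}}$ and look for a controllable, control-only \emph{implicate} $\phi^{\{u\}}$ of $\Phi$, so that $\phi^{\{u\}}\wedge\Phi\equiv\Phi$; the paper instead expands $\pastG_{[a,b]}$ and $\pastF_{[a,b]}$ into $X$-operators, distributes, and rewrites $\Phi$ in conjunctive normal form over literals $X^n(x^j\sim c)$ and $X^n(u^j=c)$, then argues that controllability forces at least one clause to contain only control literals; that clause becomes $\phi^{\{u\}}$ and the remaining clauses become $\phi^{\{x,u\}}$. Your reduction is clean, avoids the CNF blow-up, and your observation that the definition of controllability only hands you one control sequence \emph{per history} --- not one uniform falsifying pattern $\bar u$ --- is a real subtlety that the paper's sketch glosses over.

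However, the way you propose to close that gap does not go through as stated. An induction on the syntax of $\Phi$ with hypothesis ``controllable $\Rightarrow$ admits a controllable control-only implicate'' breaks in at least two places. First, the base case: a state atom $x^j>c$ is not controllable and admits no control-only implicate other than $\mathbf{T}$, so subformulas of a controllable $\Phi$ need not satisfy the induction hypothesis and you cannot recurse into them. Second, and more seriously, the disjunction case: from falsifying patterns $\bar u_1$ for $\Phi_1$ and $\bar u_2$ for $\Phi_2$ you need a \emph{single} pattern falsifying $\Phi_1\vee\Phi_2$, and $\bar u_1,\bar u_2$ may be incompatible (e.g.\ demanding different values of the same $u^j$ at the same time offset); when they are, $\Phi_1\vee\Phi_2$ is in fact not controllable, but establishing that requires the converse implication, which your induction does not carry. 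What you actually need is the single lemma ``for systems of form \eqref{eq:system}, $\Phi$ is controllable iff some control pattern $\bar u$ over $\Phi$'s bounded past window falsifies $\Phi$ for every state realization,'' and the natural way to prove it is precisely the paper's normal-form argument: if every CNF clause contained a state literal, no control choice could guarantee falsification against adversarial dynamics, so some clause is control-only and falsifiable, and falsifying it yields your $\bar u$. With that lemma in hand, the rest of your construction --- defining $\phi^{\{u\}}$ as ``the control window differs from $\bar u$,'' checking it is controllable and implied by $\Phi$, and invoking Theorem~\ref{theorem:Teorem1} --- is correct.
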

\par \textit{İspat Fikri: } Bildirideki yer kısıtından ötürü, ispat yerine teoremin ispatında kullanılan temel fikir anlatılmıştır. 

$\Phi^{(l,r)}$ şablonunun herhangi bir ptSTL formülüne getirdiği tek kısıt, formülde, formülün geri kalanına "ve" operatörü ile bağlı, yalnızca kontrol girdilerinden oluşan bir birim bulunmasıdır.
Verilen kontrol edilebilir formül tanımından ötürü, $\textbf{F}$ formülüne denk olan formüller dışındaki bütün formüllerde kontrol girdisi bulunması gereklidir. Örneğin $\varphi$ formülü hiçbir kontrol girdisi içermeyen rastgele bir "kontrol edilebilir" formül olsun. Buna göre formülün içerdiği bütün eşitsizlikler $x^i \sim c$ formundadır ($\sim \in \{<,>\}$). Bu kontrol edilebilirlik tanımı ile çelişmektedir, çünkü~\eqref{eq:system}'deki gibi tanımlanan herhangi bir sistem için bu formülün ihlal edilmesini sağlayacak kontrol dizisi bulunması garanti edilemez. Ulaşılan çelişki, $\varphi$'nin en az bir kontrol girdisi bilgisi ($u^i \sim c$) taşıdığını gösterir.
 
Herhangi bir ptSTL formülü üzerinde, aşağıdaki dönüşüm kuralları uygulanarak $\pastG_{[a,b]}$ ve $\pastF_{[a,b]}$ operatörleri, $X$ operatörleri ile değiştirilebilir. 
\begin{align*}
& \pastG_{[a,b]}\phi = X^b\phi \wedge X^{b+1} \wedge \ldots \wedge X^a \phi \\
& \pastF_{[a,b]}\phi = X^b\phi \vee X^{b+1} \vee \ldots \vee X^a \phi \\
\end{align*}
$X$ operatörleri, operatörler yalnızca atomik formüllerin önünde kalıncaya dek $\vee$ ve $\wedge$ operatörleri üzerine dağıtılabilir ($\sim \in {\wedge, \vee}$):
\begin{equation*}
X^n(\phi_1 \sim \phi_2) = X^n \phi_1 \sim X^n\phi_2
\end{equation*}
Her değişken eşitsizliğini önündeki $X$ operatörleri ile birlikte bir atomik formül olarak düşünerek, formülün  birleşim normal formunda (CNF-conjunctive normal form) denkliği yazılabilir. Sonrasında, değil $\neg$ operatörü, $X$ operatörlerinin içine aşağıdaki kurala göre geçirilebilir:
\begin{equation*}
\neg X^n \phi = X^n \neg \phi
\end{equation*}
Betimlenen işlemler sonrasında ulaşılan formül aşağıdaki formdadır: 
\begin{equation}\label{eqn1}
\phi_1 \wedge \ldots \wedge \phi_z
\end{equation} 
öyle ki, her bir $\phi_i$ yalnıca $X^n x^j \sim c$ ve $X^n u^j \sim c$ formundaki atomik formüllerin $\vee$ operatörü ile birleşiminden oluşur. İspatın başında sunulan argümandan ötürü, \eqref{eqn1} formülünün ihlal edilebilirliği; ancak bu $\phi_i$ formüllerinden en azından bir tanesi sadece $X^n u^j \sim c$ formundaki formüllerden oluşuyorsa garanti edilebilir. Bu formülün diğerleri ile $\wedge$ operatörü ile bağlanması, ve herhangi bir kontrol edilebilir formülün bu forma dönüştürülebilmesi sunulan ispatı tamamlar.

\par\textbf{Not:}Kısıtlı zaman limitlerine sahip $\,\since$ operatörü $\pastF$ ve $\pastG$ operatörleri tarafından ifade edilebildiğinden, yer kısıtı nedeniyle $\since$ operatörü kanıta dahil edilmemiştir. % Yine kanıtın seyrini değiştirmeyen bir basitleştirme yaparak $x<c$ ve $x>c$ formüllerini birbirlerinin değili kabul ettik. Bu kabul, algoritma karmaşıklığını azaltmak amacıyla yapılmıştır.\\ \newline

Kontrol edilebilir neden formüllerinin \eqref{eq:neden} şeklinde tanımlanabildiği gösterildi. \ref{theorem:Teorem1}'e göre,  eğer $\phi^{\{u\}}$ kontrol edilebilir ise, $\Phi^{(l,r)}$ de kontrol edilebilirdir.
Bu çalışmada, $\phi^{\{u\}}$'nun kontrol edilebilirliğini tam kapsamlı arama (exhaustive search) ile kontrol edilmektedir.
Yani, $\phi^{\{u\}}$ formülünde geçen tüm girdilerin tüm değerlerinin tüm olası kombinasyonlarını formül üzerinde denenmekte. Eğer  $\phi^{\{u\}}$'nun değerini $\mathbf{F}$ yapan bir kontrol girdisi değeri bulunabilirse,  $\phi^{\{u\}}$ kontrol edilebilirdir sonucuna ulaşılıyor. 

\section{Kontrol Ed{\footnotesize İ}leb{\footnotesize İ}l{\footnotesize İ}r Formül Üretme Algor{\footnotesize İ}tması}
% \textbf{Not:} Kullanılan sistem tanımında \eqref{eq:system}, $k$ anındaki bir izin etiketi, yalnızca $k-1$ anı ve öncesindeki sistem ve kontrol değerlerine bağlı olduğundan, alt neden formülü şablonu \eqref{eq:neden} yerine 
% \begin{equation}\label{eq:pastneden}
% \Phi^{(l,r)}_i = \pastF_{[1,1]}(\phi^{\{u\}}_i \wedge  \phi^{\{x,u\}}_i)
% \end{equation}
% şablonunda aranmaktadır.\\ \newline

Bu çalışmada, \cite{CDC2019} numaralı referansta sunulan formül arama algoritması uyarlanarak \eqref{eq:formulashape} formatında ve \eqref{eq:neden} şeklinde tanımlanmış alt nedenler içeren bir formül üretilmesi sağlanmıştır. 
Bu uyarlama sırasında, operatör sayısının artışı, alt neden formülü yapısındaki değişikliğe göre adapte edilmiştir. Temel olarak \cite{CDC2019} numaralı çalışmada sadece $\pastG_{[0,b]} u = c$ formatındaki kontrol formüllerine izin verilirken (tek operatör $\pastG_{[0,b]}$), bu çalışmada kontrol girdileri üzerine tanımlı herhangi bir kontrol edilebilir formüle izin verilmektedir. Dolayısı ile $\phi^{\{u\}}$ formülünün içerisinde geçen operatörler de yinelemeli algoritma içerisinde hesaplanmıştır.  Ek olarak, formül türetme algoritmasına kontrol edilebilirlik için doğrulama adımı eklenmiştir.

 Bütünlük açısından, detaylara girilmeden formül türetme algoritması, ana hatlarıyla aşağıda verilmiştir.
 
Algoritmanın aldığı parametreler: $\underline{total\_oc}$, $\overline{total\_oc}$ toplam operatör sayısı (\eqref{eq:neden}'deki $l+r$) için alt ve üst limitler, $\overline{p}$: \eqref{eq:formulashape} formülündeki $p$ için üst limit, $\underline{val}$: $\Psi$ formülüne eklenecek her yeni $\Phi^{(l,r)}_i$'nin,$\Psi$ formülünün değerine getirmesi gereken artış için alt limit, $\mathcal{D}$: \eqref{eq:dataset} veri kümesi, $\mathcal{P}$: parametre değer kümesi.\\ \newline
Algoritmada öncelikle
anlık\_os$ = \underline{total\_oc}$,
son\_formül\_başarısı = 0, 
$\Psi$ formülü boş olarak başlatılır.\\ \newline
anlık\_os$  <  \overline{total\_oc}$ \textbf{ve} $\Psi$'nin boyu $\overline{p}$'den küçük iken, aşağıdaki adımlar tekrarlanır:  
\begin{itemize}
\item $l+r$=anlık\_os'yi sağlayan $l,r \in \mathbb{N}$ olacak şekilde tüm $(l,r)$ ikili kombinasyonları bulunur ve \\operatör sayısı $l$ olan tüm $\phi^{\{u\}}$ formülleri şablon olarak üretilerek  $\phi^{\{u\}}$-listesi oluşturulur,  operatör sayısı $r$ olan tüm $\phi^{\{x,u\}}$ formülleri şablon olarak üretilerek $\phi^{\{x,u\}}$-listesi oluşturulur.(Örn, $l=0$ için, trafik örneğinde, $\phi^{\{u\}}$-listesi $= [(u^0 = c_0), (u^1=c_1)]$)
\item  $\phi^{\{u\}}$-listesi'deki formüllerin içinde parametreler yerleştirilir, kontrol edilemeyen formüller elenir, kontrol edilebilir parametrik formüllerden parametrik- $\phi^{\{u\}}$-listesi oluşturulur.
\item parametrik-$\phi^{\{u\}}$-listesi ve $\phi^{\{x,u\}}$-listesi'ndeki formüllerden birer formül alınarak oluşturulabilecek tüm $(\phi^{\{u\}}_i \wedge \phi^{\{x,u\}}_j)$ formülleri oluşturulur ve bu formüllerden $\Phi$-listesi oluşturulur.
\item Şebeke araması (grid-search) ile $\Phi$-listesi'ndeki tüm formül şablonlarına tüm olası parametre değerleri yerleştirilir ve içlerinden en başarılı olanı $\Phi^*$ bulunur.

\item $\Phi^*$'ın başarı değeri $-$ son\_formül\_başarısı $\geq \underline{val}$ ise,

\begin{itemize}
\item $\Psi = \Psi \vee \Phi^*$
\item son\_formül\_başarısı $=$ $\Psi^*$'ın başarısı
\item son\_başarılı\_formül\_os = anlık\_os 
\end{itemize}
\item $\Phi^*$'ın başarı değeri $-$ son\_formül\_başarısı $< \underline{val}$ ise,
\begin{itemize}
\item son\_başarılı\_formül\_os $\neq$ anlık\_os ise,
\begin{itemize}
\item Algoritma sonlandırılır\\
\end{itemize}
\item Aksi takdirde,
\begin{itemize}
\item anlık\_os += 1\\
\end{itemize}
\end{itemize}
\end{itemize}

\textbf{Trafik örneği üzerindeki sonuçlar}: Geliştirilen formül arama algoritması trafik örneği (\ref{ex:trafficsystem}) üzerinde çalıştırıldığında elde edilen sonuçlar, algoritmanın çalıştırıldığı parametre değerleri ile birlikte TABLO \ref{tab:congested}'de verilmiştir:
 
 \begin{table}[h!]
\begin{center}
 \begin{tabular}{|c| c | c | m{6.5cm}| }
 \hline
 $\underline{oc}$ & $\overline{oc}$ & $\overline{p}$ &  $\Psi$  \\ [0.5ex] 
 \hline\hline
 1 & 1 & 1 & $ ( (u^0 = 0) \wedge ((x^3 > 10)\wedge (u^0 = 0)) )$\\
 \hline
 2 & 2 & 1 &$ ((u^0 = 0) \wedge ((x^4 > 10) \wedge (u^1 = 0)) \vee (x^3 > 10))$\\
 \hline
0 & 2 & $\infty$ & $( ( (u^1=0) \wedge (x^4>10)) ) \vee 
                  ( (( u^0 =0) \wedge (x^3 > 10)) ) \vee
                  ( (( u^0 =1) \wedge (x^0 > 30)) ) \vee
                  ( (( u^1 =1) \wedge (x^1 > 30)) )$  \\

\hline\end{tabular}
\end{center}
\caption{}\label{tab:congested}
\end{table}
$\underline{oc}$ ve $\overline{oc}$, sırasıyla $\underline{total\_oc}$ ve $\overline{total\_oc}$'yi temsil etmektedir. İlk iki örnekte $\overline{p}=1$ olduğu için $\underline{val}$ etkisizdir. Son örnekte $\underline{val}=0.001$ değeri kullanılmıştır. $\Psi$, geliştirilen algoritmanın verilen parametrelerle çalıştırılması sonucu elde edilen neden formülüdür.

Sonuçlar Şekil~\ref{fig:traffic} üzerinden incelendiğinde, bulunan neden formüllerinin sezgilere uygun olduğu görülebilir. Her bir alt neden formülü genellikle, bir trafik ışığının, doluluk oranı yüksek bir yoldaki trafik akışını engellememesi gerektiğini ifade eder. 

 % \underline{oc}=\overline{oc}=1,\overline{p}=1, from 384 formulas
 %P 1 1 ( ( x5 = 0.0 ) & ( ( x3 > 10.0 ) & ( x5 = 0.0 ) ) )
%\underline{oc} \overline{oc} = 2, \overline{p}=1, from 12504 formulas
 %P 1 1 ( ( x5 = 0.0 ) & ( ( ( x4 > 10.0 ) & ( x6 = 0.0 ) ) | ( x3 > 10.0 ) ) )
 %( ( ( P 1 1 ( ( x6 = 0.0 ) & ( x4 > 10.0 ) ) ) | ( P 1 1 ( ( x5 = 0.0 ) & ( x3 > 10.0 ) ) ) ) | ( P 1 1 ( ( x5 = 1.0 ) & ( x0 > 30.0 ) ) ) ) | ( P 1 1 ( ( x6 = 1.0 ) & ( x1 > 30.0 ) ) )
 % Bir de heuristic ile calistirayim

 %Örnek şablon sayısı şu kadardır, biz ön eleme algoritmamız ile bunlardan sadece şu kadarını değerlendiriyoruz.
 \section{Sonuç}
 Bu bildiride, etiketli sinyal kümesi üzerinde ptSTL formülleri türetilerek istenmeyen olayların nedenlerini bulmaya yönelik bir \textit{kontrol edilebilir neden şablonu} sunulmuştur. Geliştirilen bu şablon, kontrol edilebilir tüm nedenleri kapsamaktadır. İstenmeyen olayların frekansını düşürmek için uygulanacak takip aşaması, bu nedenlerin bir kontrolcü ile engellenmesidir . Bildiride sunulan şablonun, daha önce kullanılan şablonlardan daha açıklayıcı olması, çözümün başarısını artıracaktır. Sistem-bağımsız bu istenmeyen olay tespit etme ve azaltma algoritmasının verimliliğini artırmak için atılabilecek bir sonraki adım, verilen şablondaki formüllerin \textit{kontrol edilemeyenlerinin} elenmesi için bulunacak daha verimli bir algoritma veya şablonda yapılacak bir basitleştirme olacaktır.
 \section*{B{\footnotesize İ}lg{\footnotesize İ}lend{\footnotesize İ}rme}
Bu çalışma 117E242 numaralı TÜBİTAK projesi ile desteklenmiştir.

%  \textit{Bilgilendirmeler Gizlenmiştir}

% use section* for acknowledgement
%\section*{TEŞEKKÜR}
%Yazarın teşekkür etmek istediği kurum yada kişiler burada belirtilecek.

% trigger a \newpage just before the given reference
% number - used to balance the columns on the last page
% adjust value as needed - may need to be readjusted if
% the document is modified later
%\IEEEtriggeratref{8}
% The "triggered" command can be changed if desired:
%\IEEEtriggercmd{\enlargethispage{-5in}}

% references section

% can use a bibliography generated by BibTeX as a .bbl file
% BibTeX documentation can be easily obtained at:
% http://www.ctan.org/tex-archive/biblio/bibtex/contrib/doc/
% The IEEEtran BibTeX style support page is at:
% http://www.michaelshell.org/tex/ieeetran/bibtex/
%\bibliographystyle{IEEEtran}
% argument is your BibTeX string definitions and bibliography database(s)
%\bibliography{IEEEabrv,../bib/paper}
%
% <OR> manually copy in the resultant .bbl file
% set second argument of \begin to the number of references
% (used to reserve space for the reference number labels box)

\bibliographystyle{IEEEtran}
\bibliography{ebru_t}

% that's all folks
\end{document}